\newcommand{\M}{\mathcal{M}}
\newcommand{\E}{\mathcal{E}}
\newcommand{\ket}[1]{\left| #1 \right>}
\def\Forall#1{\forall \; {#1}\boldsymbol{.}\;}
\theoremstyle{plain}
\newtheorem{theorem}{Theorem}[section]
\newtheorem{proposition}[theorem]{Proposition}
\newtheorem{corollary}[theorem]{Corollary}
\theoremstyle{definition}
\newtheorem{definition}[theorem]{Definition}
\newtheorem{example}[theorem]{Example}
\title{Reflections on the PBR Theorem: \\ Reality Criteria \& Preparation Independence}
\author{
Shane Mansfield
\institute{Quantum Group \\ Department of Computer Science \\ University of Oxford}
\email{shane.mansfield@cs.ox.ac.uk}
}
\date{\today}
\begin{document}

\maketitle

\begin{abstract}
This paper contains initial work on attempting to bring recent developments in the foundations of quantum mechanics concerning the nature of the wavefunction within the scope of more logical and structural methods. A first step involves dualising a criterion for the reality of the wavefunction proposed by Harrigan \& Spekkens, which was central to the Pusey-Barrett-Rudolph theorem. The resulting criterion has several advantages, including the avoidance of certain technical difficulties relating to sets of measure zero. By considering the `reality' not of the wavefunction but of the observable properties of any ontological physical theory a new characterisation of non-locality and contextuality is found. Secondly, a careful analysis of preparation independence, one of the key assumptions of the PBR theorem, leads to a precise analogy with the kind of locality prohibited by Bell's theorem. Motivated by this, we propose a weakening of the assumption to something analogous to no-signalling. This amounts to allowing global or non-local correlations in the joint ontic state, which nevertheless do not allow for superluminal signalling. This is, at least, consistent with the Bell and Kochen-Specker theorems. We find a counter-example to the PBR argument, which violates preparation independence, but does satisfy this physically motivated assumption. The question of whether the PBR result can be strengthened to hold under the relaxed assumption is therefore posed.
\end{abstract}

\section{Introduction}

The issue of the reality of the wavefunction has received a lot of attention recently (see especially \cite{pusey:12,colbeck:12,hardy:13a}). In this paper, we show that insights may also be gained by taking a similar approach to considering the `reality' of objects and properties in physical theories more generally, and in particular that such an approach can provide a new perspective on non-locality and contextuality.
The first step will be to formalise a suitably general criterion for `reality' inspired by the Harrigan-Spekkens criterion for the reality of the wavefunction \cite{harrigan:10}, which was the subject of the Pusey-Barrett-Rudolph theorem \cite{pusey:12}.

The aim is to formulate the ideas in a manner that can allow for a deeper, structural understanding of what is at play. Indeed, the initial motivation was to bring considerations of this kind within the scope of the methods of the unified sheaf-theoretic approach to non-locality and contextuality \cite{abramsky:11,abramsky:11a,mansfield:13b}. The resulting criterion has several advantages. It avoids certain technical difficulties, and due to its generality it can be applied within any ontological physical theory: e.g. generalised probabilistic theories \cite{barrett:07}, or classical mechanics.

The initial investigations here also show how such considerations can provide an alternative perspective on foundational questions more generally. We find an alternative characterisation of both local and non-contextual correlations, as those that can arise from observations or measurements of properties that can be considered `real' in this sense. This ties together the notions of locality and reality, bringing to light another link between the Bell and Pusey-Barrett-Rudolph (PBR) theorems \cite{pusey:12}, which deal, respectively, with these properties.

We begin, in section \ref{ontic}, by presenting our general, reformulated criterion for reality, which requires minimal background. Much of the literature on the foundations of quantum mechanics, including that concerning recent developments on the reality of the wavefunction, deals with hidden variable or ontological models. Therefore, we will provide a brief review of this framework in section \ref{ontsect}, which readers familiar with the material may wish to skim over, paying attention to the notation used. In section \ref{observables}, we apply the criterion not to the wavefunction but to observable properties, leading to a characterisation of locality akin to that of the unified sheaf-theoretic approach to non-locality and contextuality \cite{abramsky:11} or to Kochen-Specker contextuality \cite{kochen:75}. We demonstrate how this may be used to arrive at treatments of the fact that local hidden variable models can be subsumed by the sheaf-theoretic framework \cite{abramsky:11}, and the EPR argument \cite{einstein:35}.

Finally, in section \ref{sec:pbr}, we give a detailed consideration of preparation independence, which first appeared as one of the assumptions of the PBR theorem. We show that the assumption, which is crucial to the theorem, is analogous in a precise sense to Bell locality. Aside from this being another link between the Bell and PBR theorems, the analogy would also suggest that the assumption may be too strong, and that it could be weakened to something analogous to no-signalling \cite{ghirardi:80}. A counter-example to the PBR result was constructed by Lewis et al.~\cite{lewis:12} by dropping the assumption of preparation independence entirely, with the caveat that for compound systems it would necessarily introduce superluminal signalling. Here, we relax preparation independence to an independence assumption that is still well motivated and rules out signalling, and construct a counter-example which avoids this caveat. It is not clear, however, if it is possible to strengthen the PBR theorem so that its result still holds under the weaker assumption. We will mention, too, that by assuming preparation independence one can very easily prove Bell's result, a fact that may cast further suspicion on the strength of the stricter assumption.

\section{A Criterion for Reality}\label{ontic}

In this section we will use the terminology of Harrigan \& Spekkens \cite{harrigan:10}, which has been established in the literature. We begin by reviewing their criterion for the reality, or \emph{onticity}, of the wavefunction, which we then dualise and re-cast. We note that a dual view was suggested in \cite{harrigan:10}, though it was not formalised. For this, we need only postulate, for each system, a space $\Lambda$ of \emph{ontic states}. These can be considered to correspond to real, physical states of the system. The idea will be that objects or properties that are determined with certainty by the ontic state can themselves be considered ontic. The term ontic is chosen deliberately, and it is supposed that such objects, properties, or states, have a real objective existence as opposed to having a merely phenomenal existence. We do not, however, propose to get into a discussion of the suitability of terminology here. Similarly, objects or properties that are not determined with certainty are said to be \emph{epistemic}, recalling that the literal meaning of the term is that which relates to knowledge or to its degree of validation. The use of the term in \cite{harrigan:10} can be taken to reflect the fact that objects and properties of this kind are necessarily probabilistic and could thus be assumed to represent a degree of knowledge about some underlying ontic object or property. It should be borne in mind, of course, that results relating to these definitions will hold regardless of the physical significance attached to them.

As well as the existence of an ontic state space, the authors of \cite{harrigan:10} also posit the assumption that the preparation of any quantum state $\left| \psi \right>$ induces a distribution $\mu_{\left| \psi \right>}$ over the ontic state space $\Lambda$ for that system, specifying the probabilities for the system to be in each ontic state given that it has been prepared in this way.

\begin{definition}[Harrigan \& Spekkens \cite{harrigan:10}]\label{def:hs}
If, for all wavefunctions $\left| \psi \right> \neq \left| \phi \right>$ of each system, the induced distributions $\mu_{\left| \psi \right>}$ and $\mu_{\left| \phi \right>}$ have non-overlapping supports, the wavefunction is said to be \emph{ontic}. Otherwise, there exist some $\left| \psi \right> \neq \left| \phi \right>$ such that $\mu_{\left| \psi \right>}(\lambda)>0$ and $\mu_{\left| \phi \right>}(\lambda)>0$ for some $\lambda \in \Lambda$, and the wavefunction is said to be \emph{epistemic}.
\end{definition}

We now formalise a more general, dual version of the definition. We thereby shift from thinking of values of properties as giving probabilistic information about ontic states, to thinking of ontic states as giving probabilistic information about values of properties. As we will see, the definition can be applied to any object or property. Though the wavefunction would more usually be considered to be (at least) a mathematical object rather than a property of a system, for simplicity we only refer to properties from now on. The terms ontic and epistemic apply to the relationship or specification of values for each ontic state, and it is this that we refer to as a property.

\begin{definition}\label{def:ontepi}
A \emph{$\mathcal{V}$-valued property over $\Lambda$} is a function $f: \Lambda \rightarrow \mathcal{D}(\mathcal{V})$, where $\mathcal{D}(\mathcal{V})$ is the set of probability distributions over $\mathcal{V}$.
The property is said to be \emph{ontic} in the special case that, for all $\lambda \in \Lambda$, the distribution $f(\lambda)$ over $\mathcal{V}$ is a delta function.
Otherwise, it is said to be \emph{epistemic}.
\end{definition}

Another way of stating this is that ontic properties are generated by functions $\widehat{f}:\Lambda \rightarrow \mathcal{V}$; i.e. they map each ontic state to a unique value. For epistemic properties, however, there is at least one ontic state that is compatible with two or more distinct values in $\mathcal{V}$.

We now set about showing how these definitions relate, which may not be immediately clear. Any $\mathcal{V}$-valued property $f$ specifies probability distributions over $\mathcal{V}$, conditioned on $\Lambda$. Bayesian inversion can be used to obtain probability distributions over $\Lambda$, conditioned on $\mathcal{V}$, which we (suggestively) label $\{ \mu_v \}_{v \in \mathcal{V}}$. Explicitly,
\begin{equation}\label{distdef}
\mu_v(\lambda) := \frac{ f(\lambda) (v) \cdot p(\lambda)}{ \int_{\Lambda} f(\lambda') (v) \cdot p(\lambda') \, d\lambda'},
\end{equation}
assuming a uniform prior distribution $p(\lambda)$ on $\Lambda$. Note that this is only well-defined for finite $\Lambda$, and that a more careful measure theoretic treatment, which will not be provided here, is required for the infinite case.

\begin{proposition}\label{prop:hscor}
A $\mathcal{V}$-valued property over finite $\Lambda$ is ontic (definition \ref{def:ontepi}) if and only if the distributions $\{ \mu_v \}_{v \in \mathcal{V}}$ have non-overlapping supports.
\end{proposition}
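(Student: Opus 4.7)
The plan is to reduce both directions to a simple observation about the support of the Bayesian inverses. Concretely, under the uniform prior on finite $\Lambda$, equation~(\ref{distdef}) gives $\mu_v(\lambda) > 0$ iff $f(\lambda)(v) > 0$, provided the normalising denominator is non-zero (which is automatic whenever $\mu_v$ exists at all, since in that case \emph{some} $\lambda'$ must contribute). So the support of $\mu_v$ is exactly the preimage $\{\lambda \in \Lambda : f(\lambda)(v) > 0\}$. The whole statement then becomes a statement about whether these preimages partition (up to disjointness on) $\Lambda$.

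First I would handle the forward direction. Assume $f$ is ontic, so that there is an underlying function $\widehat{f} : \Lambda \to \mathcal{V}$ with $f(\lambda)(v) = 1$ when $v = \widehat{f}(\lambda)$ and $0$ otherwise. Then $\operatorname{supp}(\mu_v) = \widehat{f}^{-1}(v)$, and since $\widehat{f}$ is a function these fibres are pairwise disjoint for distinct $v \in \mathcal{V}$.

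For the converse I would argue contrapositively. Suppose $f$ is epistemic; then there exists some $\lambda_0 \in \Lambda$ for which $f(\lambda_0)$ is not a delta distribution, so there are distinct values $v \neq v'$ in $\mathcal{V}$ with $f(\lambda_0)(v) > 0$ and $f(\lambda_0)(v') > 0$. By the observation above, both $\mu_v(\lambda_0) > 0$ and $\mu_{v'}(\lambda_0) > 0$, so $\lambda_0$ lies in the supports of both, contradicting the non-overlapping hypothesis.

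The only subtle point, and the one I would flag explicitly, is the well-definedness of $\mu_v$: the denominator in (\ref{distdef}) vanishes precisely when $f(\lambda)(v) = 0$ for every $\lambda$, in which case the value $v$ is never assigned positive weight by $f$ and one should simply treat $\operatorname{supp}(\mu_v)$ as empty (or restrict attention to values in the image of $f$). With that convention the equivalence goes through cleanly on finite $\Lambda$; extending to infinite $\Lambda$ would require the measure-theoretic treatment the author has explicitly deferred.
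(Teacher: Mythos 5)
Your proof is correct and follows essentially the same route as the paper's: both directions hinge on the observation that, under the uniform prior on finite $\Lambda$, equation~(\ref{distdef}) gives $\mu_v(\lambda)>0$ if and only if $f(\lambda)(v)>0$, after which onticity of $f$ is equivalent to disjointness of the supports. Your explicit handling of the degenerate case where the normalising denominator vanishes is a small point of rigour the paper leaves implicit, but it does not change the argument.
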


\begin{proof}
Suppose the property $f$ is ontic according to definition \ref{def:ontepi}, let $\lambda \in \Lambda$, and let $v,v' \in \mathcal{V}$ such that $v \neq v'$. Assume for a contradiction that $\mu_v(\lambda)>0$ and $\mu_{v'}(\lambda)>0$. Then, by (\ref{distdef}), $f(\lambda) (v)>0$ and $f(\lambda) (v')>0$; but since $f$ is ontic,
\[
v_\lambda = v \neq v' = v_\lambda,
\]
where $v_\lambda := \widehat{f}(\lambda)$.

Conversely, suppose that the distributions $\{ \mu_v \}_{v \in \mathcal{V}}$ have non-overlapping supports and assume for a contradiction that $f(\lambda) (v)>0$ and $f(\lambda) (v')>0$. Then, by (\ref{distdef}), $\mu_v(\lambda)>0$ and $\mu_{v'}(\lambda)>0$.
\end{proof}

One way of thinking about this correspondence, which may merit further research, could be as a kind of Stone duality, or as a special case of the dual equivalence between the category of von Neumann algebras and $*$-homomorphisms and the category of measure spaces and measurable functions \cite{heunen:13}.

To illustrate, we provide a couple of simple examples of ontic and epistemic properties.

\begin{example}[Classical Mechanics]
The phase space of a system is taken to be the ontic state space. Classical mechanical observables (energy, momentum, etc.) are represented by real-valued functions on phase space, and are therefore ontic.
\end{example}

\begin{example}[Fuzzy Measurement]\label{ex:epi}
Consider an experiment in which a bag is prepared containing two coins, which can each be green or white, with equal probability, but are otherwise identical. We claim that the process of removing one and checking its colour measures an epistemic property. If the ontic states are $\Lambda = \{ GG, GW, WG, WW \}$, the property cannot be represented by a $\{G,W\}$-valued function on $\Lambda$. Given the ontic state $GW$, for example, both $G$ and $W$ are compatible, and can arise with equal probability.
\end{example}

In this second example, the property that is being measured is, according to the present definition, epistemic with respect to the state of the bag; it might also be said the example describes a fuzzy measurement on the state of the bag.

Definition \ref{def:ontepi} has some advantages.
\begin{itemize}
\item
It is fully general and can be applied to any object or property in any ontological theory.
\item
It avoids measure theoretic problems relating to sets of measure zero that are inherent to the original \cite{hall:11}.
\item
It is also mathematically straightforward and conceptually transparent.
\item
We will see in section \ref{observables} that, while generalisation would be possible in the original formulation, generalising the criterion in the present form avoids the need to postulate additional non-specified distributions.
\end{itemize}

\section{Ontological Models}\label{ontsect}

We are concerned with theories that give operational predictions for outcomes to measurements; we refer to sets of such predictions as empirical models. Quantum mechanics is one such theory, which might be described operationally by saying that we associate a density matrix $\rho^p$ with each preparation $p$, a POVM $\{E^m_o\}_{o \in O}$ with each measurement $m$, and prescribe the probability of the outcome $o$ given preparation $p$ and measurement $m$ by
\[
p(o\mid m,p) = \text{tr}(\rho^p \, E^m_o).
\]

We wish, more generally, to consider theories with the same kind of operational structure. In order to do so, we will use some notation that is similar to that of the sheaf-theoretic approach. For each system we assume spaces $P$ of preparations, $X$ of measurements, and $O$ of outcomes. There may be some compatibility structure on the space of measurements, say $\mathcal{M} \subseteq \mathcal{P}(X)$, specifying which sets of measurements can be made jointly (in quantum mechanics, this is specified by the commutative sub-algebras of the algebra of observables). This information encodes which kind of measurement scenario we are working in: e.g. the Bell-CHSH model \cite{clauser:69,bell:87}, Hardy model \cite{hardy:92,hardy:93}, and PR correlations \cite{popescu:94} all deal with two-party scenarios in which each party can choose freely between two binary-outcome measurements\footnote{This measurement scenario is referred to as the $(2,2,2)$ Bell scenario.}. Again, we additionally assume a space $\Lambda$ of \emph{ontic states}, over which each preparation induces a probability distribution.

In an effort to simplify notation, we will use an overline to denote a joint measurement \[\overline{m}=\{m_A,m_B, \dots \} \in \mathcal{M}\] and also to denote joint outcomes $\overline{o} \in \mathcal{E}(\overline{m})$ to a joint measurement; here $\mathcal{E}(\overline{m})$ is the set of functions $\overline{o}:\overline{m} \rightarrow O$. Readers familiar with the sheaf-theoretic approach will recall that $\mathcal{E}:X \rightarrow O^X$ is the event sheaf. On the other hand, $m\in X$ and $o \in O$, without overlines, denote individual measurements and outcomes, respectively. It may be the case that a particular individual measurement can belong to several allowed sets of joint measurements, etc. Joint preparations and joint ontic states will be treated similarly in section \ref{sec:pbr}.

\begin{definition}\label{def:hv}\index{hidden variable model}
An \emph{ontological} or \emph{hidden variable model} $h$ over $\Lambda$ specifies:
\begin{enumerate}
\item
A distribution \[h(\lambda \mid p)\] over the ontic states $\Lambda$ for each preparation $p \in P$;
\item
A distribution
\begin{equation}\label{eq:ontstatestats}
h(\overline{o}\mid \overline{m},\lambda)
\end{equation}
over joint outcomes $\E(\overline{m})$, for each ontic state $\lambda \in \Lambda$ and joint measurement $\overline{m} \in \mathcal{M}$.
\end{enumerate}
The \emph{operational probabilities}\index{operational probabilities} are then prescribed by
\begin{equation}\label{hv}
h(\overline{o}\mid \overline{m},p) = \int_\Lambda d \lambda \; h(\overline{o}\mid\overline{m},\lambda) \; h(\lambda\mid p).
\end{equation}
\end{definition}

The terms ontological model and hidden variable model are both used in the literature, but recently the term ontological model has gained some popularity. It may be a more suitable term in the sense that the `hidden' variable need not necessarily be hidden at all: it could be directly observable. In Bohmian mechanics \cite{bohm:52,bohm:52a}, for example, position and momentum play the role of the hidden variable. It also carries the connotation that such a model is an attempt to describe some underlying ontological reality.

\begin{definition}
A theory which predicts the measurement statistics for the ontic states (\ref{eq:ontstatestats}) will be referred to as an \emph{ontological theory} over $\Lambda$.
\end{definition}

We are especially interested in ontological models and theories that can reproduce quantum mechanical predictions. Trivially, the simplest such theory is quantum mechanics itself, regarded as an ontological theory.

\begin{example}[$\psi$-complete Quantum Mechanics]\index{$\psi$-completeness}
The ontic state is identified with the quantum state. A preparation produces a density matrix, which is regarded as a distribution over the projective Hilbert space associated with the system. By construction, the operational probabilities are those given by the Born rule.
\end{example}

Of course, quantum mechanics, treated as an ontological theory in itself in this way, has certain non-intuitive features (Einstein, Podolsky \& Rosen provided one early discussion of this \cite{einstein:35}) but later results such as Bell's theorem \cite{bell:64} and the Kochen-Specker theorem \cite{kochen:75} clarified the fact that non-locality and contextuality are necessary features of any theory that can account for quantum mechanical predictions. In order to address these issues, we point out some relevant properties that ontological models may have.

\begin{definition}\index{$\lambda$-independence}
An ontological model is \emph{$\lambda$-independent} if and only if the distributions over $\Lambda$ induced by each preparation $p \in P$ do not depend on the joint measurement $\overline{m} \in \mathcal{M}$ to be performed.
\end{definition}

We have already implicitly assumed this in definition \ref{def:hv}, but it is worth making it clear since it is a crucial assumption in all of the familiar no-go theorems. In a $\lambda$\emph{-dependent} model, on the other hand, the probabilities of being in the various ontic states would depend on both the preparation of the system and the joint measurement being performed, and we would have $h(\lambda\mid p,\overline{m})$ rather than $h(\lambda\mid p)$ in equation (\ref{hv}).

\begin{definition}
An ontological model is \emph{deterministic} if and only if for each $\lambda \in \Lambda$ and set of compatible measurements $\overline{m} \in \mathcal{M}$ there exists some joint outcome $\overline{o} \in \mathcal{E}(\overline{m})$ such that $h(\overline{o}\mid \overline{m},\lambda)=1$.
\end{definition}

In such a model, the outcome to any measurement that can be performed on an ontic state is determined with certainty.

For any distribution $h(\overline{o}\mid \overline{m},\lambda)$ over joint outcomes $\overline{o} \in \mathcal{E}(\overline{m})$ to the joint measurement $\overline{m} \in \M$ on the ontic state $\lambda \in \Lambda$, we can find a distribution $h(o\mid m, \lambda)$ over outcomes $o \in O$ to any individual measurement $m \in \overline{m}$ by marginalisation. 

\begin{definition}
An ontological model is \emph{parameter-independent} if and only if the probability distribution $h(o\mid m, \lambda)$ over $O$ is well-defined for each $m \in X$ and $\lambda \in \Lambda$.
\end{definition}

By well-definedness we mean that the same marginal distribution $h(o\mid m, \lambda)$ is obtained regardless of which set of joint of measurements we marginalise from (in the case that $m \in \overline{m}$ and $m \in \overline{m}'$ for example). Parameter independence thus asserts that the probabilities of outcomes to a particular measurement do not depend on the other measurements being performed. It is related to the notion of no-signalling \cite{ghirardi:80} for operational probabilities, a property which is necessarily satisfied by all quantum correlations and rules out the possibility of super-luminal signalling taking place via the measurement process at the operational level (the choice of measurement at one site cannot influence outcomes at any other site).

\begin{definition}
Operational probabilities $h(\overline{o} \mid \overline{m},p)$ are \emph{no-signalling} if and only if all marginal probability distributions $h(o \mid m,p)$ are well-defined.
\end{definition}

Preparation independence ensures no-signalling by ruling out the possibility of super-luminal causal influences at the ontological level. Precisely, preparation independence and $\lambda$-independence, together, imply no-signalling.

\begin{definition}\index{locality \& non-contextuality!ontological model/theory}
An ontological model is \emph{local (or non-contextual)} if and only if it is both deterministic and parameter-independent; empirical correlations are \emph{local (non-contextual)} if and only if they can be realised by a local (non-contextual) model.
\end{definition}

This says that for each ontic state there is a certain outcome to any measurement that can be performed, and that this does not depend on which other measurements are made. The term local is generally only used when the system being modelled is spatially distributed; where such an arrangement is not assumed, the model is said to be non-contextual.

We draw attention to the fact that another definition of locality that is common in the literature concerns the factorisability of the distributions
\begin{equation}\label{eq:bellloc}
h(\overline{o}\mid \overline{m},\lambda) = \prod_{m \in \overline{m}} \, h(o\mid m,\lambda).
\end{equation}
While the present definition may be a less familiar means of presenting non-locality, it is important to note that these definitions were shown to be equivalent, in the sense that they generate the same sets of empirical models, in \cite{abramsky:11}, which built on work by Fine \cite{fine:82} that was specific to the $(2,2,2)$ Bell scenario.

\section{Observable Properties}\label{observables}

If we are to assume that the outcomes of measurements provide the values of properties of a system, then we require that for each measurement $m \in X$ there must exist an $O$-valued property $f_m: \Lambda \rightarrow \mathcal{D}(O)$ such that $f_m(\lambda) (o) = h(o\mid m,\lambda)$ for all $\lambda \in \Lambda$ and $o \in O$.
\begin{definition}\label{def:obsprop}
The \emph{observable properties} of an ontological model $h$ over $\Lambda$ are the $O$-valued properties $f_m: \Lambda \rightarrow \mathcal{D}(O)$ given by
\begin{equation}\label{prophv}
f_m(\lambda) (o) := h(o\mid m,\lambda)
\end{equation}
for each $m \in X$ such that the marginal $h(o\mid m,\lambda)$ is well-defined.
\end{definition}
By generalising in the present dualised formulation, we avoid postulating that particular values of properties induce non-specified distributions $\mu_v$ over the space of ontic states and reasoning in terms of these, in favour of the more palatable postulate that outcomes of measurements correspond to the values of properties of a system.

\begin{theorem}\label{nogothm}\index{locality \& non-contextuality!observable properties}
An ontological model is local (or non-contextual) if and only if all measurements are of ontic observable properties.
\end{theorem}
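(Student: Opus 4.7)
The plan is to prove both directions by unpacking the relevant definitions and connecting determinism plus parameter-independence (which together make up locality/non-contextuality) with the statement that every observable property $f_m$ is well-defined and takes values that are delta distributions.

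For the forward direction, assuming the model is local, parameter-independence guarantees that the marginal $h(o\mid m,\lambda)$ is well-defined for every $m\in X$ and $\lambda\in\Lambda$, so by Definition \ref{def:obsprop} the observable property $f_m$ exists for every measurement. I would then fix $\lambda$ and any joint measurement $\overline{m}\ni m$, invoke determinism to pick the unique $\overline{o}\in\mathcal{E}(\overline{m})$ with $h(\overline{o}\mid\overline{m},\lambda)=1$, and marginalise to obtain $h(o\mid m,\lambda)=\delta_{o,\overline{o}(m)}$. Hence each $f_m(\lambda)$ is a delta distribution, so each $f_m$ is ontic in the sense of Definition \ref{def:ontepi}.

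For the converse, assume every observable property is ontic. The mere existence of $f_m$ for every $m\in X$ is, by Definition \ref{def:obsprop}, the statement that every marginal $h(o\mid m,\lambda)$ is well-defined, which is exactly parameter-independence. It remains to derive determinism. Fix $\lambda\in\Lambda$ and $\overline{m}=\{m_1,\dots,m_k\}\in\mathcal{M}$. For each $m_i\in\overline{m}$ the hypothesis gives some $o_{m_i}\in O$ with $h(o\mid m_i,\lambda)=\delta_{o,o_{m_i}}$. I would then argue that for any $\overline{o}\in\mathcal{E}(\overline{m})$ with $\overline{o}(m_i)\neq o_{m_i}$ for some $i$, the value $h(\overline{o}\mid\overline{m},\lambda)$ must vanish, since otherwise summing over the remaining coordinates would contradict the marginal being a delta at $o_{m_i}$. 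Normalisation of the joint distribution then forces $h(\overline{o}^*\mid\overline{m},\lambda)=1$ for the unique $\overline{o}^*$ with $\overline{o}^*(m_i)=o_{m_i}$ for all $i$, establishing determinism.

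The main obstacle is this last step in the converse: one must be careful that delta-valued marginals on individual measurements genuinely force the joint distribution over $\mathcal{E}(\overline{m})$ to be a delta, since in general marginals underdetermine joints. The argument works here only because each marginal is concentrated at a single point, which makes the support of the joint collapse to a singleton; I would spell this out as a short lemma or as an explicit inclusion-of-supports argument to avoid any appearance of circularity with parameter-independence.
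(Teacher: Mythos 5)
Your proof is correct and follows the same decomposition as the paper: locality splits into determinism plus parameter-independence, with parameter-independence corresponding exactly to the existence (well-definedness) of all the observable properties $f_m$, and determinism corresponding to each $f_m(\lambda)$ being a delta. The one place where you go beyond the paper's own proof is the converse derivation of determinism: the paper simply records the equivalence $h(o\mid m,\lambda)=1 \Leftrightarrow f_m(\lambda)(o)=1$ at the level of individual marginals, leaving implicit the passage from delta-valued marginals back to a delta-valued joint distribution over $\mathcal{E}(\overline{m})$, whereas you supply the support argument (any $\overline{o}$ disagreeing with some $o_{m_i}$ must have zero probability, and normalisation then concentrates the joint on the unique compatible $\overline{o}^*$) explicitly. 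That step is exactly right --- delta marginals do collapse the support of the joint to a singleton even though marginals in general underdetermine joints --- so your worry at the end is resolved by the argument you already give; spelling it out as a short lemma, as you suggest, would make the proof strictly more complete than the one in the paper.
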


\begin{proof}
First, we claim that a model is deterministic if and only if its observable properties are ontic. This holds since, by (\ref{prophv}),
\[
h(o\mid m,\lambda)=1 \qquad \Leftrightarrow \qquad f_m(\lambda) (o) = 1.
\]
Next, we claim that a model is parameter independent if and only if all measurements are of observable properties. This holds since, by definition \ref{def:obsprop}, all measurements are of observable properties if and only if all marginals $h(o\mid m,\lambda)$ are well-defined. The result follows.
\end{proof}

This characterisation of locality, which falls out easily from the definitions, is similar to the Kochen-Specker \cite{kochen:75} or topos approach \cite{isham:98} treatments of non-contextuality. It can provide an alternative and sometimes simpler approach to certain results. The first result we mention shows that local ontological models have a canonical form. In fact, it shows that local ontological or hidden variable models can equivalently be expressed as distributions over the set of global assignments. In this sense it shows how local ontological models are subsumed by the sheaf-theoretic approach; c.f.~main theorem of \cite{abramsky:11}, and can also be understood as a generalisation of the work of Fine \cite{fine:82}. An interesting, related point is that, by allowing for negative probabilities, these canonical models can also generate all no-signalling correlations \cite{abramsky:11,abramsky:14,mansfield:13t}.

\begin{theorem}\label{thm:canonical}
Local models can be expressed in a \emph{canonical form}, with an ontic state space $\Omega := \mathcal{E}(X)$, and probabilities
\[
h(\overline{o}\mid \overline{m},\omega) = \prod_{m \in \overline{m}} \, \delta \left( \omega(m), \overline{o}(m) \right)
\]
for all $\overline{m} \in \mathcal{M}$, $\overline{o} \in \mathcal{E}(\overline{m})$, and $\omega \in \Omega$.
\end{theorem}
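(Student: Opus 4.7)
The plan is to invoke Theorem \ref{nogothm} to obtain well-defined ontic observable properties, use these to build a map from $\Lambda$ into the space of global assignments $\Omega = \mathcal{E}(X)$, and then push the preparation-induced distributions forward along this map.

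First, since the model is local, Theorem \ref{nogothm} gives that every observable property $f_m$ is ontic, so each is generated by a function $\widehat{f}_m: \Lambda \to O$ with $h(o \mid m, \lambda) = \delta(o, \widehat{f}_m(\lambda))$. Parameter independence is precisely what guarantees that $\widehat{f}_m(\lambda)$ depends only on $m$ and $\lambda$ rather than on the joint measurement context in which $m$ appears, so I can assemble these coherently into a map $\Lambda \to \Omega$, $\lambda \mapsto \omega_\lambda$, defined by $\omega_\lambda(m) := \widehat{f}_m(\lambda)$.

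Second, I would verify that for each joint measurement $\overline{m} \in \mathcal{M}$, joint outcome $\overline{o} \in \mathcal{E}(\overline{m})$, and ontic state $\lambda \in \Lambda$,
\[
h(\overline{o} \mid \overline{m}, \lambda) = \prod_{m \in \overline{m}} \delta(\omega_\lambda(m), \overline{o}(m)).
\]
Determinism forces $h(\cdot \mid \overline{m}, \lambda)$ to be concentrated on a single joint outcome; marginalising to any $m \in \overline{m}$ and invoking parameter independence pins down that outcome's $m$-component as $\widehat{f}_m(\lambda) = \omega_\lambda(m)$, and a delta on a function with prescribed values at each $m$ is exactly the displayed product of Kronecker deltas.

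Third, I would define the canonical preparation distributions by pushforward, $h'(\omega \mid p) := \sum_{\lambda \in \Lambda :\, \omega_\lambda = \omega} h(\lambda \mid p)$ in the finite case, with the obvious measure-theoretic pushforward otherwise. Substituting into equation (\ref{hv}) and re-indexing the integral over $\Lambda$ by fibres of $\lambda \mapsto \omega_\lambda$ shows that the canonical model on $\Omega$ reproduces the original operational probabilities exactly. The main obstacle is the structural bookkeeping in the second step: one must confirm that the determinism-given joint delta really equals the product of marginal deltas, and that parameter independence rules out any contextual dependence of $\omega_\lambda(m)$ on the ambient joint measurement $\overline{m}$. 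Once those points are secured the remainder is a routine pushforward calculation.
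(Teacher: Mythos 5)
Your proof is correct. The paper itself gives no argument for this theorem, deferring entirely to the citation \cite{mansfield:13t}, so there is nothing in the text to compare against; your outline supplies the standard argument, and it is sound. The two points you flag as the main obstacles are indeed the crux and you resolve them correctly: parameter independence makes $\widehat{f}_m(\lambda)$ context-free so that $\omega_\lambda \in \mathcal{E}(X)$ is well defined, and determinism concentrates $h(\cdot \mid \overline{m},\lambda)$ on a single joint outcome whose components are then forced, by marginalisation, to agree with $\widehat{f}_m(\lambda)$ for each $m \in \overline{m}$, which is exactly the displayed product of deltas. The concluding pushforward along $\lambda \mapsto \omega_\lambda$ then reproduces the operational probabilities by re-indexing the sum in equation (\ref{hv}) over the fibres of that map, as you say.
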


\begin{proof}
See \cite{mansfield:13t}.
\end{proof}

The next proposition will not be surprising in light of the EPR argument \cite{einstein:35}. It shows that if one were to take the view that quantum mechanics is $\psi$-complete then all non-trivial observables are epistemic or inherently probabilistic. Indeed, we can obtain a re-statement of the EPR result as a corollary.

\begin{proposition}\label{incomp}\index{$\psi$-completeness}
Any non-trivial quantum mechanical observable is epistemic with respect to $\psi$-complete quantum mechanics.
\end{proposition}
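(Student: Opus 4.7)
The plan is to argue directly from Definition \ref{def:ontepi} applied to the observable property $f_A$ associated with a non-trivial observable $A$ under the $\psi$-complete interpretation. Recall that in $\psi$-complete quantum mechanics the ontic state space $\Lambda$ is identified with (the projective) Hilbert space, so each ontic state is itself a wavefunction $\ket{\psi}$. By the Born rule and the definition of observable properties (\ref{prophv}), we have
\[
f_A(\ket{\psi})(o) \;=\; \bra{\psi} E^A_o \ket{\psi}
\]
for each outcome $o$ of $A$. The property $f_A$ is ontic precisely when, for every $\ket{\psi} \in \Lambda$, this distribution over outcomes is a delta function.

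First, I would fix what ``non-trivial'' must mean here: the observable has at least two distinct outcomes $o_1 \neq o_2$ whose effects $E^A_{o_1}, E^A_{o_2}$ are not proportional to the identity on the whole space, so that there exist states $\ket{\phi_1}, \ket{\phi_2}$ with $\bra{\phi_i} E^A_{o_i} \ket{\phi_i} > 0$ and $\ket{\phi_1} \neq \ket{\phi_2}$ up to global phase. For projective measurements this just says $A$ has two distinct eigenvalues; for POVMs it is the natural analogous non-degeneracy condition.

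Next I would exhibit a concrete wavefunction witnessing epistemicity. The natural choice is a normalised superposition $\ket{\psi} = \alpha \ket{\phi_1} + \beta \ket{\phi_2}$ with $\alpha, \beta \neq 0$ (in the projective case these can simply be two eigenvectors with distinct eigenvalues). A direct computation with the Born rule then shows $f_A(\ket{\psi})(o_1) > 0$ and $f_A(\ket{\psi})(o_2) > 0$ simultaneously, so $f_A(\ket{\psi})$ is not a delta distribution. By Definition \ref{def:ontepi}, $f_A$ is therefore epistemic.

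The only real subtlety is the POVM versus projective case, and ensuring ``non-trivial'' is stated in a way that rules out degenerate edge cases (e.g.\ a POVM whose effects are all multiples of the identity, which would trivially give the same distribution on every state and so could be vacuously ontic only if it has one outcome). Once that is pinned down, the proof is essentially the observation that a non-trivial observable always admits a non-eigenstate, and non-eigenstates are precisely those $\ket{\psi}$ for which $f_A(\ket{\psi})$ is non-deterministic. I expect no serious obstacle, and no appeal to Proposition \ref{prop:hscor} or Theorem \ref{nogothm} is needed; the statement is almost a direct unpacking of definitions once the Born rule is invoked.
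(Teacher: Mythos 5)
Your proposal is correct and follows essentially the same route as the paper: identify the ontic state with the wavefunction, take a superposition of two eigenvectors with distinct eigenvalues, and observe via the Born rule that $f_{\hat{A}}(\lambda)$ assigns positive probability to both outcomes and hence is not a delta function. The paper states the argument only for projective observables $\hat{A} \neq \mathbf{I}$; your remarks on pinning down ``non-trivial'' for general POVMs are a mild and harmless elaboration rather than a different approach.
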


\begin{proof}
Any observable $\hat{A} \neq \mathbf{I}$ has eigenvectors, say $\left|  v_1 \right>$ and $\left|v_2 \right>$, corresponding to distinct eigenvalues, say $o_1$ and $o_2$. Consider any state $\left| \psi \right>$ such that $\left< v_1 | \psi \right> >0$ and $\left< v_2 | \psi \right> >0$. In a $\psi$-complete model, the wavefunction is the ontic state, so $\lambda = \left| \psi \right>$. Then
\[
f_{\hat{A}} (\lambda) (o_1) = h(o_1\mid \hat{A},\lambda) = \left| \left< v_1 | \psi  \right> \right|^2 >0,
\]
and similarly $f_{\hat{A}} (\lambda) (o_2) >0$. Therefore $f_{\hat{A}}$ is epistemic.
\end{proof}

\begin{corollary}[EPR]\label{prebell}\index{EPR argument}
Under the assumption of locality, quantum mechanics cannot be $\psi$-complete.
\end{corollary}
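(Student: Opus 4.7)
The plan is to derive the corollary as a straightforward contrapositive combining Proposition \ref{incomp} with Theorem \ref{nogothm}. Together these two results immediately preclude the conjunction of locality and $\psi$-completeness, so the corollary will follow by assuming locality and ruling out $\psi$-completeness.

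Concretely, I would proceed by contradiction: assume that quantum mechanics, viewed as an ontological theory, is both local and $\psi$-complete. Applying Proposition \ref{incomp}, every non-trivial observable must then be epistemic with respect to the $\psi$-complete model. On the other hand, Theorem \ref{nogothm} asserts that in any local ontological model every measurement is of an ontic observable property. Selecting any non-trivial observable (for instance a Pauli measurement on a single qubit), one obtains an immediate contradiction: the corresponding observable property $f_m$ would have to be simultaneously ontic and epistemic, and these categories are disjoint by definition.

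The only minor verification is that $\psi$-complete quantum mechanics really does feature at least one non-trivial observable to which Proposition \ref{incomp} applies, together with a state $\ket{\psi}$ having non-zero overlap with two distinct eigenvectors of that observable; this is obvious for any qubit example, e.g.~a Pauli $Z$ measurement on $\ket{+}$. The main obstacle is therefore essentially absent: all the genuine work has already been absorbed into Proposition \ref{incomp} (which does the quantum-mechanical calculation) and Theorem \ref{nogothm} (which ties locality to onticity), so the EPR statement emerges as a one-line consequence. I would also add a brief remark noting that this presentation recasts the classical EPR argument as a transparent corollary within the observable-property framework introduced above, making explicit the conceptual route from \emph{locality} to \emph{determinism of observable properties} to the incompleteness of the wavefunction description.
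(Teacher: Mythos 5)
Your proposal is correct and follows essentially the same route as the paper: combining Proposition \ref{incomp} with Theorem \ref{nogothm} to conclude that $\psi$-complete quantum mechanics cannot be local. The paper states this directly as a contrapositive in two lines, whereas you phrase it as a contradiction and add the (harmless, sensible) check that a non-trivial observable and a suitable state exist; the substance is identical.
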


\begin{proof}
By Proposition \ref{incomp}, any non-trivial quantum observable is epistemic with respect to $\psi$-complete quantum mechanics. Therefore, by Theorem \ref{nogothm}, $\psi$-complete quantum mechanics is not local.
\end{proof}

This is the same result that was argued for by EPR, though this proof has more in common with an earlier argument by Einstein at the 1927 Solvay conference \cite{bacciagaluppi:10}, and also with a more recent, general treatment found in \cite{abramsky:10} and \cite{brandenburger:08}.

\section{The PBR Theorem}\label{sec:pbr}

In this section we briefly make some observations relating to the PBR theorem, which deals with the reality (i.e. onticity in the sense of definitions \ref{def:hs} and \ref{def:ontepi}) of the wavefunction. One of the assumptions for this result is \emph{preparation independence} \cite{pusey:12}: \begin{quote} systems that are prepared independently have independent physical states.\end{quote} The other assumptions are implicit in the present framework.

\begin{theorem}[PBR]\label{pbrthm}
For any preparation independent theory that reproduces (a certain set of) quantum correlations, the wavefunction is ontic.
\end{theorem}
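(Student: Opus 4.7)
The plan is to prove the theorem by contradiction, closely following the strategy of Pusey, Barrett and Rudolph. Suppose, for contradiction, that there is a preparation independent ontological theory reproducing the relevant quantum statistics in which the wavefunction is epistemic in the sense of Definition \ref{def:hs}. Then there exist distinct wavefunctions $\ket{\psi_0}, \ket{\psi_1}$ of some system such that $\mu_{\ket{\psi_0}}$ and $\mu_{\ket{\psi_1}}$ have overlapping supports, i.e.\ there is a measurable set $\Delta \subseteq \Lambda$ of positive measure under both distributions on which an ontic state is compatible with either preparation. The canonical choice, following PBR, is $\ket{\psi_0} = \ket{0}$ and $\ket{\psi_1} = \ket{+}$, but any pair of non-orthogonal states should work after rescaling $n$ below.

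Next, I would exploit preparation independence. Prepare $n$ systems independently, each chosen (by the experimenter) to be either $\ket{\psi_0}$ or $\ket{\psi_1}$. This yields $2^n$ possible product preparations $\ket{\psi_{x_1}} \otimes \cdots \otimes \ket{\psi_{x_n}}$ indexed by $\overline{x} \in \{0,1\}^n$. By preparation independence, the induced joint distribution over the composite ontic state space is the product $\mu_{\ket{\psi_{x_1}}} \otimes \cdots \otimes \mu_{\ket{\psi_{x_n}}}$, so the event that the joint ontic state lies in $\Delta^n$ occurs with strictly positive probability, and on this event the joint ontic state is simultaneously compatible with all $2^n$ of the product preparations.

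The heart of the proof is then to choose $n$ and exhibit a joint (entangled) measurement $\overline{m}$ on the $n$ systems with $2^n$ outcomes $\{\overline{o}_{\overline{x}}\}_{\overline{x} \in \{0,1\}^n}$ such that for each $\overline{x}$ the Born rule gives
\[
p\bigl(\overline{o}_{\overline{x}} \mid \overline{m}, \ket{\psi_{x_1}} \otimes \cdots \otimes \ket{\psi_{x_n}}\bigr) = 0.
\]
For $\ket{0}$ and $\ket{+}$ the PBR construction achieves this already at $n=2$ via a specific orthonormal basis whose four elements are each orthogonal to exactly one of $\ket{00}, \ket{0+}, \ket{+0}, \ket{++}$; for less distinguishable pairs one must take $n$ larger, as in the generalisation by Pusey, Barrett and Rudolph. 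Granting such a measurement, the operational relation (\ref{hv}) forces, for every joint ontic state $\lambda \in \Delta^n$, the conditional probability $h(\overline{o}_{\overline{x}} \mid \overline{m}, \lambda)$ to vanish for \emph{every} $\overline{x}$, since $\lambda$ contributes with positive weight to each of the $2^n$ joint preparations. But $\{\overline{o}_{\overline{x}}\}$ exhausts the outcomes, so the probabilities must sum to one, giving the desired contradiction.

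The main obstacle is the third step: producing a measurement whose $2^n$ outcomes each annihilate one of the product preparations. For $\ket{0}$ and $\ket{+}$ this is a short explicit linear-algebra check on an entangled basis of $\mathbb{C}^2 \otimes \mathbb{C}^2$; in general one has to invoke the PBR construction (or Moseley's refinement) showing that for any pair of non-orthogonal states, $n$ can be chosen large enough that such a basis exists. Everything else, including the appeal to preparation independence to multiply the overlap measures and the averaging against $h(\lambda \mid p)$ in (\ref{hv}), is comparatively routine.
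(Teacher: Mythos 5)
Your proposal is correct and follows essentially the same route as the source the paper relies on: the theorem is stated here as an imported result from \cite{pusey:12} rather than proved, and the portion of the argument the paper does reproduce (in the proof of Proposition \ref{prop:antipbr}: overlapping supports, the region $\Delta$, and the use of preparation independence to get probability $q^n>0$ of all ontic states lying in $\Delta$, hence compatibility with all $2^n$ product preparations) coincides exactly with your first two steps, while your third step --- the antidistinguishing entangled basis, explicit for $\ket{0},\ket{+}$ at $n=2$ and via the general PBR construction otherwise --- is the standard completion. The only point needing care is that the operational relation (\ref{hv}) forces $h(\overline{o}_{\overline{x}}\mid\overline{m},\lambda)=0$ only for \emph{almost} every $\lambda$ in the support of each preparation distribution rather than for every $\lambda\in\Delta^n$; since $\Delta^n$ has positive product measure the contradiction survives, but this is precisely the measure-zero subtlety the paper flags via \cite{hall:11} as a motivation for its dualised reality criterion.
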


The preparation independence assumption is concerned with the composition of systems and has not appeared in previous no-go results. We will attempt to give this a more careful treatment. First of all, the PBR theorem describes a \emph{preparation scenario}. More generally, we might think of preparation scenarios as an analogue of measurement scenarios, in which the preparations $P$ play the role of measurements and the ontic states $\Lambda$ play the role of outcomes; see Table \ref{tab:mpco}.
Just as we had a compatibility structure $\M$ for measurements, which in Bell scenarios allowed us to chose one measurement from each site, we should in general allow for a compatibility structure $\mathcal{P}$ for preparations, which in the case of the PBR result allows us to chose one preparation per site. We should allow for joint ontic states $\overline{\lambda}$, just as we allowed for joint outcomes. Similarly to before, we will take $\overline{p}$ to denote a tuple of joint preparations, one for each site, and $\overline{\lambda}: \overline{p} \rightarrow \Lambda$ to denote a tuple of joint hidden variables. The definitions of an ontological model and the properties from section \ref{ontsect} can be modified in the obvious way to account for this additional structure.

\begin{table}
\begin{center}
\caption{\label{tab:mpco} Analogy between measurement and preparation scenarios, up to a sheaf-theoretic description of preparation models (c.f.~\cite{abramsky:11}).}
\begin{doublespace}
\begin{tabular}{lc|lc}
\multicolumn{2}{c|}{Measurement Scenario} & \multicolumn{2}{c}{Preparation Scenario} \\ \hline
Measurements & $X$ & Preparations & $P$ \\
Outcomes & $O$ & Ontic states & $\Lambda$ \\
Non-locality & ~ & Preparation independence & ~ \\
No-signalling & ~ & No-preparation-signalling & ~ \\
Measurement compatibility & $\mathcal{M}$ & Preparation compatibility & $\mathcal{P}$ \\
Measurement events & $\E(\overline{m}) := O^{\overline{m}}$ & Preparation events & $\E(\overline{p}) := \Lambda^{\overline{p}}$ \\
Empirical model & $\{e_{\overline{m}}\}_{\overline{m}\in\M}$ & Preparation model & $\{e_{\overline{p}}\}_{\overline{p}\in\mathcal{P}}$ \\
\multicolumn{2}{r|}{$\left(\Forall{\overline{m} \in \mathcal{M}} \; e_{\overline{m}} \in \mathcal{D}\E(\overline{m})\right)$} & \multicolumn{2}{r}{$\left(\Forall{\overline{p} \in \mathcal{P}} \; e_{\overline{p}} \in \mathcal{D}\E(\overline{p})\right)$}
\end{tabular}
\end{doublespace}
\end{center}
\end{table}

We are now in a position to give a more careful definition of preparation independence.

\begin{definition}
An ontological theory $h$ over $\Lambda$ is \emph{preparation independent} if and only if we can factor
\begin{equation}\label{eq:lsep}
h(\overline{\lambda} \mid \overline{p}) = \prod_{p \in \overline{p}} \, h( \lambda_p \mid p)
\end{equation}
for all $\overline{p} \in \mathcal{P}$, where $\lambda_p := \overline{\lambda}|_{p}$.
\end{definition}

Presented in this way, preparation independence (\ref{eq:lsep}) in a preparation scenario is clearly seen to be analogous to non-contextuality or Bell locality (\ref{eq:bellloc}) in a measurement scenario. An intriguing question is what happens if this is relaxed to an assumption analogous to no-signalling, in which we only assume that the marginal distributions $h(\lambda_p \mid p)$ are well-defined. Such a `no-preparation-signalling' assumption would ensure that the preparation at one site cannot affect the probabilities of various ontic states at another site. Preparation independence would trivially imply no-preparation-signalling, but not vice versa. It is true that it would allow for global or non-local correlations in the joint ontic state $\overline{\lambda}$; but perhaps in light of the Bell and Kochen-Specker theorems this is to be expected. We therefore propose this as a more reasonable independence condition.

\begin{definition}\label{def:noprepsig}
An ontological theory $h$ over $\Lambda$ is \emph{no-preparation-signalling} if and only if the marginal probabilities $h(\lambda_p \mid \overline{p})$ are well-defined.
\end{definition}

If we weaken the assumption of preparation independence to that of no-preparation-signalling, we will show it is possible to avoid the conclusion of PBR. Under the modified assumptions we will show how to construct a counter-example to the \emph{argument} given by PBR for the onticity of the wavefunction \cite{pusey:12} (Proposition \ref{prop:antipbr}). The important question that will remain to be answered, therefore, is whether, with the weaker no-preparation-independence assumption, a result similar to, or indeed counter to, that of PBR can be proved. This question will be the subject of forthcoming work by the author.

\begin{proposition}\label{prop:antipbr}
The PBR argument for the onticity of the wavefunction breaks down for ontological theories which satisfy no-preparation-signalling but not preparation independence.
\end{proposition}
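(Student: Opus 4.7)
The plan is to exhibit an ontological theory for a compound quantum system that is $\psi$-epistemic, reproduces the specific quantum correlations used by PBR, and satisfies no-preparation-signalling in place of preparation independence. First, recall the skeleton of the PBR argument in the present framework. One assumes two non-orthogonal states whose induced single-system distributions $\mu_{\left|\psi_0\right>},\mu_{\left|\psi_1\right>}$ share an overlap region $\Delta\subseteq\Lambda$ of positive measure. Preparation independence forces the joint distribution $h(\overline{\lambda}\mid\overline{p})$ induced by any of the $2^n$ product preparations $\overline{p}$ to factor, so it places positive mass on $\Delta^{\overline{p}}$ for every choice. PBR then exhibit an entangled measurement $\overline{m}$ and, for each bit string $\vec{x}$ labelling a joint preparation, an outcome whose quantum probability vanishes under $\vec{x}$; a joint ontic state in $\Delta^{\overline{p}}$ is compatible with every preparation, so via (\ref{hv}) it would have to assign zero probability to all $2^n$ outcomes, a contradiction.

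The counter-example breaks the factorisation of $h(\overline{\lambda}\mid\overline{p})$ in precisely the way needed to dissolve this contradiction. Taking the minimal PBR scenario --- two sites and the four product preparations built from $\left|0\right>$ and $\left|+\right>$ --- I would specify joint distributions over $\Lambda\times\Lambda$ whose single-site marginals agree with the prescribed $\mu_{\left|0\right>}$ and $\mu_{\left|+\right>}$ but whose supports avoid $\Delta\times\Delta$. Intuitively, the mass that would have landed in the joint overlap is re-routed by anti-correlating the two sites on that region: whenever one ontic state lies in $\Delta$, the other is pushed out of $\Delta$. This is in the spirit of the Lewis et al.\ construction, but the marginals are held fixed so that no observer at one site can detect the choice of preparation made at another. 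No-preparation-signalling then holds by construction, while preparation independence fails exactly where it must. The outcome distributions $h(\overline{o}\mid\overline{m},\overline{\lambda})$ can finally be chosen so that each ``forbidden'' PBR outcome receives zero probability on the ontic states supporting its corresponding preparation, which is consistent because distinct preparations now live on disjoint pieces of $\Lambda\times\Lambda$ and so no single $\overline{\lambda}$ is required to zero out all $2^n$ outcomes simultaneously.

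The main obstacle I anticipate is verifying that the constructed $h(\overline{o}\mid\overline{m},\overline{\lambda})$ remains a genuine probability distribution and reproduces the correct operational marginals on the PBR measurement, and ideally on the local measurements too. Because the entangled measurement constrains the theory only through its quantum marginals and distinct preparations have been made to live on disjoint supports, there is ample freedom to perform this assignment; the verification is essentially bookkeeping and forms the technical crux of the construction. What this plan deliberately does \emph{not} attempt, and what I would flag openly, is to block any strengthening of the PBR argument in which the entangled measurement is allowed to vary over the compound system, or in which all compound quantum correlations are demanded --- that stronger question is precisely the one posed at the end of the section.
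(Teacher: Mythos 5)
Your construction is essentially the paper's: the proof there exhibits exactly the joint distribution you describe, anti-correlating the two sites on the overlap region so that $\Delta\times\Delta$ receives zero mass while each single-site marginal still assigns probability $q$ to $\Delta$ (hence no-preparation-signalling holds), which blocks the PBR step asserting that with probability $q^2$ the joint ontic state is compatible with all four product preparations. The paper stops at defeating that step and does not carry out the response-function bookkeeping you flag, so your additional caveat goes beyond what the proposition claims but does not change the argument.
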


\begin{proof}
We begin by summarising the PBR argument up to the point at which we can find a counter-example. It is assumed that a quantum system may be prepared in states $\ket{\psi_0}$ or $\ket{\psi_1}$, inducing distributions $\mu_0(\lambda)$ and $\mu_1(\lambda)$, respectively, over the space $\Lambda$ of ontic states. Furthermore, it is assumed for a contradiction that the supports of these distributions overlap on a region $\Delta \subseteq \Lambda$, and that \[q := \min\left\{ \int_\Delta d\lambda \; \mu_0(\lambda), \int_\Delta d\lambda \; \mu_1(\lambda) \right\} >0. \]
The argument proceeds by considering two such systems, each of which is prepared independently in either $\ket{\psi_0}$ or $\ket{\psi_1}$. Given that the systems are prepared independently, then with probability $q^2>0$ both systems have ontic states in the region $\Delta$.
It therefore follows that with this probability $q^2$ the joint ontic state is compatible with each of $\ket{\psi_0} \otimes \ket{\psi_0}$, $\ket{\psi_0} \otimes \ket{\psi_1}$, $\ket{\psi_1} \otimes \ket{\psi_0}$ and $\ket{\psi_1} \otimes \ket{\psi_1}$ (in the sense that it lies in the support of the distributions on the joint ontic state space that are induced by these quantum states). However, if the systems are only required to obey no-preparation-signalling, rather than preparation independence, it is possible to find a counter-example to this step, which we present in Table \ref{tab:cexa}.

Notice that in this hypothetical preparation model, the individual ontic states never both lie in the overlap region. Hence, a joint ontic state which is compatible with all of the aforementioned quantum states can never arise. Nevertheless, the preparation model is no-preparation signalling. For either subsystem, given that the quantum state prepared is $\ket{\psi_0}$, the probability of the ontic state being in the overlap region $\Delta$ is $q$ and the probability of being outside is $1-q$, and similarly for $\ket{\psi_1}$. So the choice of quantum state prepared in one system does not affect the ontic state in the other.

\begin{table}
\caption{\label{tab:cexa} A preparation model that satisfies no-preparation-signalling but not preparation independence, and provides a counter-example to the PBR argument. The tabular representation is analogous to the framework for Bell-type measurement scenarios introduced in \cite{mansfield:11}. We read the table as saying, if the first system is prepared in the quantum state $\ket{\psi_0}$ and the second system is prepared in state $\ket{\psi_0}$ (i.e. joint quantum state $\ket{\psi_0} \otimes \ket{\psi_0}$), the probability of both ontic states being in the respective regions $\Delta$ is $0$, etc.}
\begin{center}
\begin{doublespace}
\begin{tabular}{ccccccc}
~ & ~ & ~ & \multicolumn{4}{c}{System $2$} \\
~ & ~ & ~ & \multicolumn{2}{c}{$\ket{\psi_0}$} & \multicolumn{2}{c}{$\ket{\psi_1}$} \\
~ & ~ & ~ & \multicolumn{1}{|c}{$\Delta$} & \multicolumn{1}{c|}{$\Lambda - \Delta$} & $\Delta$ & \multicolumn{1}{c|}{$\Lambda - \Delta$} \\
\cline{3-7}
\multirow{4}{*}{System $1$} & \multirow{2}{*}{$\ket{\psi_0}$} & $\Delta$ & \multicolumn{1}{|c}{$0$} & \multicolumn{1}{c|}{$q$} & $0$ & \multicolumn{1}{c|}{$q$} \\
~ & ~ & $\Lambda - \Delta$ & \multicolumn{1}{|c}{$q$} & \multicolumn{1}{c|}{$1-2q$} & $q$ & \multicolumn{1}{c|}{$1-2q$} \\
\cline{3-7}
~ & \multirow{2}{*}{$\ket{\psi_1}$} & $\Delta$ & \multicolumn{1}{|c}{$0$} & \multicolumn{1}{c|}{$q$} & $0$ & \multicolumn{1}{c|}{$q$} \\
~ & ~ & $\Lambda - \Delta$ & \multicolumn{1}{|c}{$q$} & \multicolumn{1}{c|}{$1-2q$} & $q$ & \multicolumn{1}{c|}{$1-2q$} \\
\cline{3-7}
\end{tabular}
\end{doublespace}
\end{center}
\end{table}

\end{proof}

This improves on a counter-example to the PBR argument given by Lewis, et al.~\cite{lewis:12}, which completely dropped the assumption of preparation independence. Here, we have provided a counter-example that can apply to compound systems without invoking superluminal signalling through the preparation process, due the fact that we have maintained an assumption of no-prepearation-signalling.

Another observation, which is also pointed out in \cite{harrigan:10}, is that onticity of the wavefunction is actually inconsistent with locality. This can be demonstrated as a consequence of what Schr\"{o}dinger called \emph{steering} \cite{schrodinger:36}. If a local measurement in the basis $\{\ket{0},\ket{1}\}$ is made on the first qubit of the state
\[ \ket{\phi^+} = \frac{1}{\sqrt{2}} \left( \ket{00} + \ket{11} \right) \]
then this can be considered as a remote preparation of the second qubit in one of the states $\ket{0}$ or $\ket{1}$, and similarly for a measurement in the basis $\{ \ket{+}, \ket{-} \}$. If the second sub-system has an ontic state $\lambda$ that is independent of measurements made elsewhere, then $\lambda$ must be consistent with one state from each of the sets $\{ \ket{0}, \ket{1} \}$ and $\{ \ket{+}, \ket{-} \}$, but this contradicts the onticity of the wavefunction.

The following theorem, which we propose to think of as a weak Bell theorem, since it draws the same conclusion as Bell's theorem \cite{bell:64} but with the extra assumption of preparation independence, is an obvious consequence of this.

\begin{theorem}\label{thm:wkbell}
Quantum mechanics is not realisable by any preparation independent, local ontological theory.
\end{theorem}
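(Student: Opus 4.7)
The plan is to obtain this theorem as a direct composition of the PBR result (Theorem \ref{pbrthm}) with the steering observation given in the paragraph just above. I would proceed by contradiction: suppose there exists an ontological theory $h$ over some $\Lambda$ that reproduces the quantum mechanical predictions and is both preparation independent and local.

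First, I would invoke PBR. Because $h$ is preparation independent and reproduces the requisite two-qubit statistics, Theorem \ref{pbrthm} forces the wavefunction to be ontic in $h$: for any distinct quantum states $\ket{\psi} \neq \ket{\phi}$ the induced distributions $\mu_{\ket{\psi}}$ and $\mu_{\ket{\phi}}$ have disjoint supports in $\Lambda$. Next, I would apply the steering argument to a bipartite system in $\ket{\phi^+}$. A local measurement of the first qubit in $\{\ket{0},\ket{1}\}$ amounts to a remote preparation of the second qubit in $\ket{0}$ or $\ket{1}$, and similarly for the basis $\{\ket{+},\ket{-}\}$. Since $h$ is local, the marginal distribution over the ontic state of the second subsystem must be independent of which basis is chosen at the first site (this is the parameter-independence half of locality in Definition \ref{def:hv}'s locality clause). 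Therefore any $\lambda$ that arises at the second site must simultaneously lie in the support of $\mu_{\ket{a}}$ for some $\ket{a} \in \{\ket{0},\ket{1}\}$ and in the support of $\mu_{\ket{b}}$ for some $\ket{b} \in \{\ket{+},\ket{-}\}$. But these supports are pairwise disjoint by the previous step, so no such $\lambda$ can exist, contradicting the fact that $\ket{\phi^+}$ certainly induces \emph{some} distribution on the joint ontic state space.

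The substantive content is already in place in the excerpt, so the only real obstacle is a piece of bookkeeping: I have to check that the remote-preparation slogan really does translate into a statement about the induced distributions $\mu_{\ket{\cdot}}$ at the second site. Concretely, in a local model that reproduces quantum predictions, conditioning on an outcome at the first qubit must yield a distribution on the second site's ontic state space that coincides with the distribution induced by the corresponding single-qubit preparation; otherwise subsequent measurement statistics on the second qubit would not match the Born rule. Once this identification is made explicit, the chain \emph{preparation independence} $\Rightarrow$ \emph{onticity of the wavefunction} $\Rightarrow$ \emph{failure of locality via steering} closes in two lines, giving the theorem.
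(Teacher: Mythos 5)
Your proposal is correct and follows exactly the route the paper takes: its proof is the one-line remark that the result ``follows from the PBR theorem and the occurrence of steering in quantum mechanics,'' and you have simply expanded that composition into an explicit contradiction argument, including the bookkeeping step identifying remote preparation with the induced distributions $\mu_{\ket{0}},\ket{1},\ket{+},\ket{-}$ at the second site. The only nitpick is attributional: parameter independence is its own definition in section \ref{ontsect} rather than a clause of Definition \ref{def:hv}, but this does not affect the argument.
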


\begin{proof}
This follows from the PBR theorem and the occurrence of steering in quantum mechanics (see discussion above).
\end{proof}

The ease at which this result falls out may lead us to be cautious of the strength of the preparation independence assumption.

\section{Discussion}

We have presented a more general, dualised version of the criterion for the reality or onticity of the wavefunction proposed by Harrigan \& Spekkens. Recasting the criterion in this form has been seen to give certain advantages; it avoids measure theoretic technicalities relating to sets of measure zero, is general enough to apply to any object or property in any ontological theory, and is also mathematically and conceptually straightforward. Furthermore, generalising in the present formulation avoids the need to postulate that particular values of any property induce non-specified distributions over the space of ontic states.

The obvious application of the criterion to an object or property other than the wavefunction is to the observable properties of a system. This led to a characterisation of locality and non-contextuality in terms of the nature of the observed properties. This may provide a useful tool for looking at foundational results: we have used it to obtain a short proof that local ontological models have a canonical form and to gain another perspective on the EPR argument. The characterisation is similar to the Kochen-Specker \cite{kochen:75} or topos approach \cite{isham:98}\index{topos approach} treatments of non-contextuality.

It is worth mentioning that the characterisation draws a connection between locality and onticity: these are the properties that are dealt with by the Bell and PBR theorems, respectively. A further connection was found in Theorem \ref{thm:wkbell}, which showed that a weakened version of Bell's result can be obtained by an argument that combines the PBR result with the incompatibility that arises between steering and the onticity of the wavefunction.

In relation to the PBR result itself, we have attempted to give a more careful treatment of the assumption of preparation independence, and made a concrete analogy between this property and locality/non-contextuality. It is possible to relax the assumption to one that is analogous to no-signalling, and which may still be well motivated. In this case we have provided a counter-example to the PBR argument. It improves on the counter-example provided by Lewis, et al.~\cite{lewis:12} in that it applies to compound systems, while still employing a reasonable independence condition that rules out superluminal influences. This amounts to introducing global or non-local correlations in the joint ontic state, which at least is consistent with the Bell and Kochen-Specker theorems. An open question is whether by another argument the result can be shown to hold with the relaxed assumption of no-preparation-signalling. This question will be answered comprehensively in forthcoming work by the author.

\section*{Acknowledgements}
The author thanks Samson Abramsky, Clare Horsman, Nadish de Silva and Rui Soares Barbosa for comments and discussions.

\bibliographystyle{eptcs}
\bibliography{refs2doi}

\end{document}